\newtheorem{proposition}{Proposition}
\newcommand{\F}{\mathbb{F}}
\title{The weight spectrum of two families of Reed-Muller codes }
\author{Claude Carlet,\thanks{The research of the author is partly supported by the Norwegian Research Council.} \\ \small Universities of Paris 8, France and Bergen, Norway. \\ \small {\it E-mail}: {\tt  claude.carlet@gmail.com}\\\ \and Patrick Sol\'e,\\ \small  I2M (CNRS, Aix-Marseille University, Centrale Marseille), Marseilles, France,\\ \small {\it E-mail}: {\tt  sole@enst.fr}\
}
\date{}
\begin{document}
\maketitle
\begin{abstract}
We determine the weight spectra of the Reed-Muller codes $RM(m-3,m)$ for $m\ge 6$ and  $RM(m-4,m)$ for  $m\ge 8$. The technique used is induction on $m$, using that the sum of two weights in $RM(r-1,m-1)$ is a weight in $RM(r,m)$, and using the characterization by Kasami and Tokura of the weights in $RM(r,m)$ that lie between its minimum distance $2^{m-r}$ and the double of this minimum distance. We also derive the weights of $RM(3,8),\,RM(4,9),$ by the same technique. We conclude with a conjecture on the weights of $RM(m-c,m)$, where $c$ is fixed and $m$ is large enough.
%We also give the weights of $RM(3,7),$...
\end{abstract}

\noindent {\bf Keywords:} Reed Muller codes, weight spectrum\\
{\bf MSC (2020):} 94B27, 94D10
\section{Introduction}
For classical Coding Theory notation and terminology, see the next Section. Determining the Hamming weights in Reed-Muller codes has been considered an important research topic for more than half a century
\cite[Chapt. 15]{CC-Sloane}. The weights of the Reed-Muller codes of length $2^m$ and orders $0,1,2,m-2,m-1,m$ are known (as well as their weight distributions). These weights equal $0, 2^m$ for the order 0, with additionally $2^{m-1}$ for the order 1, and $2^{m-1}\pm 2^i$ where $\frac m2\leq i\leq m$ for the order 2, see e.g. \cite{CC-Sloane}. The weights in $RM(m,m)$ are all integers between 0 and $2^m$ since $RM(m,m)=\mathbb F_2^{2^m}$. The weights in $RM(m-1,m)$ are all even integers between 0 and $2^m$ and those in $RM(m-2,m)$ (the extended Hamming code) are all even integers between 0 and $2^m$ except $2$ and $2^m-2$: these two results (as well as the weight distributions) are directly deduced from the Mac Williams identity (see e.g. \cite[Chapt. 5]{CC-Sloane} or \cite{Book}), which says that the weight distribution of a linear code can be obtained as a function of the weight distribution of its dual; the duals of $RM(m-1,m)$ and $RM(m-2,m)$, namely $RM(0,m)$ and $RM(1,m)$, being such that $A_0=A_{2^m}=1$, and for the latter, $A_{2^{m-1}}=2^{m+1}-2$, and all the other coefficients being zero, the weight distributions of $RM(m-2,m)$ and $RM(m-1,m)$ are easily deduced. Another method for determining the weight distribution of $RM(m-2,m)$ is given in \cite{SLNS}, by induction on $m$; we do not give it here since it is more complex than using the MacWilliams transform, but the idea of an induction will be central for the other orders that we shall address. Indeed, the weight distribution of $RM(2,m)$ being already rather complex, because its coefficient $A_{2^{m-1}}$ has no known compact expression, it seems impossible to deduce the weight distribution of $RM(m-3,m)$ from that of its dual, and this is still more true for $RM(m-4,m)$.\\
All the low Hamming weights are known in all Reed-Muller codes: Berlekamp and Sloane \cite{CC-berl-sl} (see the Addendum in this paper) and Kasami and Tokura \cite{CC-Kasami-Tokura} have shown that, for $r\geq 2$, the only
Hamming weights in $RM(r,m)$ occurring in the range $[2^{m-r};2^{m-r+1}[$ 
are of the form~$2^{m-r+1}-2^i$ for some $i$; and the latter have completely
characterized the codewords: the corresponding functions are affinely equivalent to
$$
\begin{cases}
 x_1\cdots x_{r-2}(x_{r-1}x_r\oplus x_{r+1}x_{r+2}\oplus \dots \oplus x_{r+2l-3}x_{r+2l-2}), & \text{for }2\leq 2l\leq m-r+2,\\
x_1\cdots x_{r-l}(x_{r-l+1}\cdots x_r\oplus x_{r+1}\cdots x_{r+l}), &
 \text{for } 3\leq l\leq \min(r,m-r).
\end{cases}
$$

 The functions whose Hamming weights are strictly less than 2.5 times the minimum distance $d=2^{m-r}$ have later been studied in \cite{CC-kasami76:_reed_muller}.\\
Those possible weights of the codewords in the Reed-Muller codes of orders $3,\dots ,m-4$ whose values lie between $2.5\, d$ and $2^m-2.5\, d$ are unknown (but when $m=2r+1$, they are known in some cases by using invariant theory, because the code is then self-dual, see~\cite{CC-Sloane,CC-chhbk}). 

In this note, we completely determine the weights in $RM(m-3,m),$ and $RM(m-4,m)$. Note that our work is on a  different track than works such as \cite{MO} in which the authors consider some weights in some Reed-Muller codes and look for all the functions having these weights, whereas we are looking for the weights (all of them) in some Reed-Muller codes and we do not try to find the functions having these weights. We first observe that the work of \cite[Theorem 16 (2)]{SLNS} can be made much more precise: this reference provides some weights in $RM(m-3,m)$ and we show that these weights are in fact all weights thanks to the result of Kasami-Tokura \cite{CC-Kasami-Tokura}. We extend then the method to the codes $RM(m-4,m).$

The material is arranged as follows. The next section recalls some basic definitions and notions needed to understand the rest of the paper. Sections \ref{sec3}, and \ref{sec4} study the weight spectra of $RM(m-3,m)$, and $RM(m-4,m)$ , respectively. Section \ref{sec5} describes some numerical examples. Section \ref{sec6}
 concludes the article.
%%%%%%%%%%%%%%%%%%%%%%%%%%%%%%%%%%%%%%%%%%%%%
\section{Preliminaries}
The {\em Hamming weight} (in brief, the weight) of an element $x=(x_1,\dots ,x_n) \in \F_2^n$ is the number of indices $i$ such that $x_i \neq 0.$
A binary linear code of {\em length} $n$ is an $\F_2$-subspace of $\F_2^n.$ Its {\em dimension} is its dimension as an $\F_2$-vector space.
Its {\em minimum distance} (in brief, distance) is the minimum Hamming weight of a nonzero codeword.

The {\em extension} $\widehat{C}$ of a linear code $C$ of length $n$ is the linear code of length $n+1$ such that each codeword $(c_0,\dots,c_n) \in \widehat{C}$ satisfies both following conditions 
\begin{enumerate}
 \item $(c_0,\dots,c_{n-1}) \in {C},$
 \item $\sum\limits_{i=0}^{n+1} c_i=0.$
\end{enumerate}

The weights of a code are the Hamming weights of all its codewords. The set of distinct weights (including the zero weight) is called the {\em weight spectrum}\footnote{Contrary to the spectra used for instance in Boolean function theory, the weight spectrum in coding theory does not include the indication of the multiplicities of the weights.}. The list of the numbers, classically denoted by $A_i$, of codewords of weight $i$ for $i$ ranging from 0 to  $n$ is called the {\em  weight distribution} of the code. The Magma notation for this quantity is the list of pairs $<i, A_i>$ where $A_i \neq 0.$

The {\em Reed Muller} codes are a family of binary linear codes of length
$n=2^m$. Given the order $r\in \{0, \dots,m\}$ of such a code (usually denoted by $RM(r,m)$), the  dimension equals $\sum\limits_{i=0}^r {m \choose i}$ and the minimum distance equals $2^{m-r}$.
An explicit definition in terms of Boolean functions is as follows.
Let $B_m$ denote the vector space of polynomials in $m$ variables with coefficients in $\F_2$, that is, of elements of $\F_2[x_1,x_2,\dots,x_m]$, in which the exponent of each variable $x_i$ in each monomial equals 0 or 1. Write $$\F_2^n=\{P_1,P_2,\dots,P_n\}.$$
Let $ev$ denote the evaluation map from $B_m$ to $\F_2^n$ by the rule
$$ev(f)=(f(P_1),\dots,f(P_n)).$$
With this notation we define the Reed-Muller code of order $r$ by

$$RM(r,m)=\{ev(f) \mid f \in B_m \,\& \deg(f)\le r\},$$where $\deg(f)$ is the global degree of the multivariate polynomial $f$ (called the algebraic degree of the Boolean function that $f$ represents).

We will use repeatedly the following Lemma, which is the case $q=2$ of \cite[Cor. 2]{SLNS}.

{\lem For all pairs of integers $(r,m)$ with $0 \le r \le m$, the weight spectrum of $RM(r+1,m+1)$ includes as a subset $S+S,$ where
$S$ is the weight spectrum of $RM(r,m),$ and
$S + S = \{s_1+s_2 \mid s_1, s_2 \in S\}.$
}\\

\noindent The proof of this lemma is particularly simple in our framework: we know (see \cite{CC-Sloane}) that all codewords in $RM(r+1,m+1)$ are obtained as the concatenation of a codeword $u$ in $RM(r+1,m)$ with a word of the form $u+v$, where $v$ is a codeword in $RM(r,m)$ (this is called the ``(u,u+v) construction"). If $u$ is itself taken in the subcode $RM(r,m)$ of $RM(r+1,m)$, then $u$ and $u+v$ are any codewords in this linear subcode and the result follows.\\

We shall refer to the following result as McEliece's congruence.

{\thm The weights in $RM(r,m)$ are multiples of
$2^{\lfloor \frac{m-1}{r}\rfloor}$ \cite{CC-McE}.
This bound is tight in the sense that there is at least a codeword of $ RM(r,m)$ with 
weight $(2t+1)2^{\lfloor \frac{m-1}{r}\rfloor}$ for some integer $t$
\cite{B}.}

We will call the following deep result the Kasami-Tokura bound.
It is a consequence of \cite[Chapt. 15, th. 11]{CC-Sloane}, which also
specifies explicitly $A_w.$

{\thm Let $w$ be a weight of $RM(r,m)$ in the range $2^{m-r}\le w <2^{m-r+1}.$ Let $\alpha=\min(r,m-r),$ and $\beta=\frac{m-r+2}{2}.$ The weight $w$ is of the form $w=2^{m-r+1}-2^{m-r+1-\mu},$ for $\mu$ in the range $1\le \mu\le \max(\alpha,\beta).$ Conversely, for any such $\mu,$ there is a $w$ of that form in the range $2^{m-r}\le w <2^{m-r+1}.$}

We will also require the notion of BCH code of length $n$ and designed distance $d,$ hereby denoted by $BCH(n,d).$ See \cite[Chapt. 9]{CC-Sloane}
for a precise definition.

\section{The weights of the Reed-Muller codes of length $2^m$ and order $m-3$}\label{sec3}

It is shown  in \cite[Theorem 16]{SLNS} by using Magma \cite{M} and by induction on $m$ that all the integers in $\{0,2,4,...,2^m\}\setminus \{2,4,6,10,2^m -2,2^m - 4,2^m -6,2^m -10\}$ are weights in $RM(m-3,m)$ for $m\geq 6$. 
\\
The method used consists in applying Lemma 1, which says that the set of weights in $RM(m-3,m)$ contains $A+A$ where $A$ is the set of weights in $RM(m-4,m-1)$. The authors start then from $RM(3,6)$, whose weights can be obtained by Magma \cite{M}: $\{0,2,4,6,8,...,64\} \setminus \{2,4,6,10,54,58,60,62\}$, and they proceed very simply by induction on $m.$ We shall detail a little their proof (which is slightly informal) and show that the set above covers in fact all weights in $RM(m-3,m)$.
\begin{proposition}\label{p1} For every $m\geq 6$, the weights in $RM(m-3,m)$ are the elements of $\{0,2,4,...,2^m\}\setminus \{2,4,6,10,2^m -10,2^m - 6,2^m -4,2^m -2\}=\{0,8,12+2i,2^m-8,2^m\}$, where $i$ ranges over consecutive integers from 0 to $2^{m-1}-12$.
\end{proposition}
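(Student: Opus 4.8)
The plan is to establish the stated equality by proving two inclusions separately. Writing $W=\{0,8\}\cup\{12,14,\dots,2^m-12\}\cup\{2^m-8,2^m\}$ for the claimed weight set, I would first show that every element of $W$ is an actual weight of $RM(m-3,m)$ (the lower bound), and then that no value outside $W$ occurs (the upper bound). The lower bound is the place where the induction on $m$ via Lemma 1 enters; the upper bound, by contrast, follows directly and uniformly for each $m$ from McEliece's congruence, the Kasami--Tokura bound, and the symmetry of the spectrum under complementation.

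For the upper bound, I would first record that all weights are even: by McEliece's congruence the weights are multiples of $2^{\lfloor(m-1)/(m-3)\rfloor}=2$ for $m\ge 6$ (equivalently, $RM(m-3,m)\subseteq RM(m-1,m)$, the even-weight code). Hence the only candidate weights lying outside $W$ are $\{2,4,6,10\}$ together with their complements $\{2^m-10,2^m-6,2^m-4,2^m-2\}$. The values $2,4,6$ are below the minimum distance $2^{m-(m-3)}=8$ and so cannot occur. For $10$, I apply the Kasami--Tokura bound with $r=m-3$: then $m-r=3$, $\alpha=\min(r,3)=3$ and $\beta=\tfrac52$, so the only weights in $[8,16)$ are $16-2^{4-\mu}$ for $\mu\in\{1,2,3\}$, namely $8,12,14$; thus $10$ is excluded. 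Finally, since the all-ones vector lies in $RM(m-3,m)$ for $m\ge 3$, the spectrum is invariant under $w\mapsto 2^m-w$, so the four top candidates are excluded as complements of $2,4,6,10$. This yields (weights)$\,\subseteq W$.

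For the lower bound I would induct on $m$, the base case $m=6$ being the Magma computation for $RM(3,6)$ quoted above, which equals $W$ for $m=6$. For the step ($m\ge 7$), assume the statement for $m-1$, so that the spectrum $S$ of $RM(m-4,m-1)=RM((m-1)-3,m-1)$ equals $\{0,8\}\cup B\cup\{2^{m-1}-8,2^{m-1}\}$ with $B=\{12,14,\dots,2^{m-1}-12\}$. By Lemma 1 the spectrum of $RM(m-3,m)$ contains $S+S$, so it suffices to verify $W\subseteq S+S$. Here the key observation is that $B$ is a full block of consecutive even integers, so $\{0,8\}+B$ already covers every even value in $[12,2^{m-1}-4]$; the remaining values $2^{m-1}-2$ and $2^{m-1}$ of the lower half are then supplied by explicit sums such as $2^{m-1}-2=12+(2^{m-1}-14)$ and $2^{m-1}=0+2^{m-1}$, using that $2^{m-1}-14\in B$ for $m\ge 7$. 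Since $S$ is symmetric about $2^{m-2}$, the set $S+S$ is symmetric about $2^{m-1}$, and reflecting the lower half covers all even values of $[2^{m-1},2^m-12]$ as well as $2^m-8$ and $2^m$; together with $0=0+0$ and $8=0+8$ this gives $W\subseteq S+S$.

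The main obstacle is the seam bookkeeping in the lower-bound step: one must confirm that $\{0,8\}+B$ joins up contiguously (no even value is skipped where the two translates of $B$ meet, which needs $2^{m-1}\ge 30$) and that the few boundary values around $2^{m-1}$ are each realized as genuine sums of two elements of $S$, while checking that the excluded values such as $10$ and $2^m-10$ are \emph{not} forced into $S+S$. This is elementary interval arithmetic on arithmetic progressions of even integers, but it is exactly where an off-by-one slip would occur, so I would verify the handful of boundary values explicitly rather than by a general range argument.
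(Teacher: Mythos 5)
Your proposal is correct and follows essentially the same route as the paper: the lower bound by induction on $m$ via Lemma 1 starting from the Magma-computed spectrum of $RM(3,6)$, and the upper bound from the minimum distance, the Kasami--Tokura bound (excluding $10$), and invariance of the spectrum under complementation with the all-one codeword. Your seam-checking of $\{0,8\}+B$ is a slightly more explicit version of the paper's ``easily covered'' step, and your remark about verifying that $10\notin S+S$ is unnecessary (it follows automatically from $S+S$ being contained in the weight set), but neither point affects correctness.
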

\begin{proof}
The result is correct for $m=6$; assuming it is correct for $m\geq 6$, then denoting the set of weights by $A$, we have that $A$ contains $\{0,8,12,14,\dots,2^m-14,2^m-12,2^m-8,2^m\}$
and therefore  $A+A$ contains 
$\{0,8,12,14,\dots,2^{m+1}-14,2^{m+1}-12,2^{m+1}-8,2^{m+1}\}$, since the case of $0,8,12,14,\dots,2^m-14,2^m-12,2^m-8,2^m$ is covered by adding $0$ to an element of $A$, the case of $2^m,8+2^m,12+2^m,14+2^m,\dots,2^{m+1}-14, 2^{m+1}-12,2^{m+1}-8,2^{m+1}$ is covered by adding $2^m$ to an element of $A$, and the remaining numbers $2^m-10, 2^m-6, 2^m-4, 2^m-2, 2^m+2, 2^m+4, 2^m+6, 2^m+10$ are easily covered as well.  By an induction using Lemma 1, all the numbers in $\{0,8,12+2i,2^m-12,2^m-8,2^m\}$ are then weights in $RM(m-3,m)$.  To complete the proof we show that these are the only possible weights in $RM(m-3,m)$.
\\The minimum distance of $RM(m-3,m)$ being 8, and the code being stable under addition of the all-one vector (the constant Boolean function 1), the numbers $2,4,6,2^m -2,2^m - 4,2^m -6$ cannot be weights, and according to Theorem 2 (\cite{CC-berl-sl,CC-Kasami-Tokura}), the only weights in the integral interval $[8,16)$ are the numbers of the form $16-2^i$ in this interval, and $10$ is not among them. This completes the proof.
\end{proof}

\section{The weights of the Reed-Muller codes of length $2^m$ and order $m-4$}\label{sec4}

Adapting the proof above to the order $m-4$ needs to have the weights of $RM(3,7)$ or better $RM(4,8)$ (since we can observe that starting from $RM(3,7)$ makes us lose some weights). It is impossible to run Magma \cite{M} exhaustively in $RM(3,7)$ (and a fortiori in $RM(4,8)$),  because this code has size $2^{64}$, but 
the weights of $RM(4,8)$ are known from the Online Encyclopedia of Integer Sequences \cite[http://oeis.org/A146976]{OEIS}, and \S 6.1 below includes a direct determination of the weights of $RM(3,7)$. The weights in $RM(4,8)$ are  $$\{0,16,24,28, 30,\dots, 226, 228, 232, 240, 256\},$$ where the dots represent all even integers between 32 and 224. This can be generalized as follows.

\begin{proposition}For every $m\geq 8$, the set of all weights in $RM(m-4,m)$ equals $\{0,16,24,28+2i, 2^m-24, 2^m-16, 2^m\}$, where $i$ ranges over the set of consecutive  integers from 0 to $2^{m-1}-28$.
\end{proposition}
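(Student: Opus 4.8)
The plan is to mirror the proof of Proposition~\ref{p1}, running an induction on $m$ with Lemma~1 as the engine and the Kasami--Tokura bound (Theorem~2) supplying the matching upper bound. The base case is $m=8$: the displayed weight set of $RM(4,8)$ is exactly $\{0,16,24,28+2i,2^8-24,2^8-16,2^8\}$ with $0\le i\le 2^7-28$, since $2^8-28=228$, $2^8-24=232$ and $2^8-16=240$. For the inductive step I would apply Lemma~1 with $r=m-4$, giving that the weight spectrum of $RM(m-3,m+1)=RM((m+1)-4,m+1)$ contains $S+S$, where $S$ is the weight spectrum of $RM(m-4,m)$; by the induction hypothesis $S=\{0,16,24\}\cup\{28,30,\dots,2^m-28\}\cup\{2^m-24,2^m-16,2^m\}$.

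The heart of the inductive step is to verify that $S+S$ contains the full claimed spectrum for $m+1$. Adding $0$ to the elements of $S$ recovers $S$, which supplies $0,16,24$ and the even values from $28$ to $2^m-28$; adding $2^m$ to the elements of $S$ supplies $2^{m+1},2^{m+1}-16,2^{m+1}-24$ and the even values from $2^m+28$ to $2^{m+1}-28$. The even values in the remaining central window are obtained as sums of two ``middle'' summands: for every even $w$ with $56\le w\le 2^{m+1}-56$ one writes $w=28+(w-28)$ if $w\le 2^m$ and $w=(2^m-28)+(w-2^m+28)$ if $w\ge 2^m$, and in both cases each summand lies in $[28,2^m-28]\subseteq S$ as soon as $m\ge 8$. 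Since $[28,54]$, $[56,2^{m+1}-56]$ and $[2^{m+1}-54,2^{m+1}-28]$ together cover every even integer in $[28,2^{m+1}-28]$, the set $S+S$ contains the entire claimed spectrum, which establishes the lower bound.

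It remains to show that no other weight occurs, and here I would argue exactly as in Proposition~\ref{p1}. Since $RM(m-3,m+1)\subseteq RM(m,m+1)$ all weights are even, and the minimum distance is $2^{(m+1)-(m-3)}=16$, so the even values $2,4,\dots,14$ are excluded; as the code contains the all-one word, the symmetric values $2^{m+1}-2,\dots,2^{m+1}-14$ are excluded as well. Finally, Theorem~2 applied to $RM(m-3,m+1)$ has window $[16,32)$; with $\alpha=\min(m-3,4)=4$ and $\beta=\frac{4+2}{2}=3$, so $\max(\alpha,\beta)=4$, it forces every weight there to be of the form $32-2^{5-\mu}$ for $1\le\mu\le 4$, that is, to lie in $\{16,24,28,30\}$. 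This rules out $18,20,22,26$, and the all-one symmetry rules out their complements. Combining the lower and upper bounds closes the induction.

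The main obstacle I anticipate is not conceptual but the boundary bookkeeping in the $S+S$ computation: one must confirm that the even values immediately surrounding $2^m$ (for instance $2^m-26$, $2^m-18$, $2^m+18$, $2^m+26$), which are \emph{not} produced by adding $0$ or $2^m$ to $S$, are nonetheless realized as sums of two middle summands, and that the interval $[28,2^m-28]$ is genuinely long enough for the splitting argument, which is where the hypothesis $m\ge 8$ (and the choice of $RM(4,8)$ rather than $RM(3,7)$ as base case) enters. The one delicate point is checking that the Kasami--Tokura parameters yield exactly $\{16,24,28,30\}$ in the window $[16,32)$, so that precisely $18,20,22,26$ are excluded --- neither more nor fewer values --- which is what makes the lower and upper bounds coincide.
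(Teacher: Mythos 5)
Your proposal is correct and follows essentially the same route as the paper's (much terser) proof: the base case $m=8$ from the known spectrum of $RM(4,8)$, Lemma~1 with the $0$-shift, $2^m$-shift and middle-summand splitting for the lower bound, and the minimum distance, all-one symmetry and Theorem~2 on the window $[16,32)$ for the matching upper bound. The only cosmetic difference is that you derive evenness from $RM(m-3,m+1)\subseteq RM(m,m+1)$ where the paper cites McEliece's congruence (Theorem~1); both are valid.
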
 
\begin{proof}
The proof is similar to that of Proposition \ref{p1}: using Lemma 1, we can show that the set of weights contains $\{0,16,24,28+2i, 2^m-28, 2^m-24, 2^m-16, 2^m\}$ and Theorem 2  tells us there is nothing else. Note that by Theorem 1 all weights are even.
\end{proof}

Note that the weight distribution of $RM(5,9)$ is given in Online Encyclopedia of Integer Sequences, see \cite[http://oeis.org/A018897]{OEIS}, and it confirms our result.

\section{Numerical examples}\label{sec5}
In the present section, we show how some known weight spectra can be obtained as well as some new ones.

\subsection{The weights of $RM(3,7)$}

The weights of $RM(2,6)$ are by the results in \cite[Chapter 15]{CC-Sloane} equal to: $$S:=\{0,16,24,28,32,36,40,48,64\}$$(actually, they are given in \cite[OEIS A001726]{OEIS}).

A direct hand calculation or a  computation in Magma \cite{M} yields:
$$S+S=\{ 0, 16, 24, 28, 32, 36, 40, 44, 48, 52, 56, 60, 64, 68, 72, 76, 80, 84, 88, 92,$$ $$
96, 100, 104, 112, 128 \}.$$
By McEliece's congruence (Theorem 1), the weights in $RM(3,7)$ are multiples of $4.$
And by \cite{CC-Kasami-Tokura} the weight $20$ and its complement to $64$ are excluded.\\
Hence, $S+S$ equals the whole weight spectrum  of  $RM(3,7).$

 \subsection{The weights of $RM(3,8)$}
 The present subsection will show why the induction of Proposition 3 must start at 
$m=9.$
The weights in $RM(2,7)$ are  by the results in \cite[Chapter 15]{CC-Sloane} equal to  
$$S:=\{0,32,48,56,64,72,80,96,128\}$$(and this is confirmed by \cite[OEIS A006006 ]{OEIS}).
 A direct hand calculation or a  computation in Magma \cite{M} yields
 $$S+S=\{ 0, 32, 48, 56, 64, 72, 80, 88, 96, 104, 112, 120, 128, 136, 144, 152, 160,$$ $$
168, 176, 184, 192, 200, 208, 224, 256 \}$$

By Theorem 2, %the weight $40$ and its complement to $216$ are excluded, and
this list is complete in the region $\{0,\dots ,64\}$ and in its complement to 256.
However all the elements in $S+S$ are multiple of $8$, while Theorem 1 states that the weights in $RM(3,8)$ are multiples of $4,$ and some weights are not multiple of $8.$
Hence $S+S$ is {\em strictly} included in the  spectrum  of  $RM(3,8),$ since McEliece's congruence is known to be tight \cite{B}.
\\
This is confirmed by the known weight distribution \cite[A146953  ]{OEIS}
\\
{\tt https://isec.ec.okayama-u.ac.jp/home/kusaka/wd/index.html}
\\
which gives in addition the weights: $$100,  108,  116,  124,  132, 140, 148, 
156,  164,$$ that are congruent to $4$ modulo $8.$

These weights can be recovered by taking the intersection of $RM(3,8)$ with the extension of $BCH(255,19).$ The 
weight distribution of this intersection (a code of dimension $26,$ small enough to allow the use of the {\tt WeightDistribution} command of Magma) is:

$$[ <0, 1>, <80, 8>, <88, 56>, <92, 512>, <96, 4939>, <100, 30216>, $$ $$
<104, 159164>,<108, 615184>, <112, 1851060>, <116, 4389152>, $$ $$
<120, 8126540>, <124, 11733960>,<128, 13287280>, <132, 11733960>, $$ $$ <136, 8126540>, <140, 4389152>,<144,
1851060>, <148, 615184>, $$ $$
<152, 159164>, <156, 30216>, <160, 4939>, <164, 512>,<168, 56>, $$$$<176, 8>, <256, 1> ].$$

\noindent {\bf Remark}. We do not provide, properly speaking, a computer-free determination of the weight spectrum of $RM(3,8)$. But we show how it can be derived in a reproducible way. \hfill $\diamond$

\subsection{The weights of $RM(4,9)$}
According to the previous subsection, the weight spectrum of  $RM(3,8)$ equals:
$$S=\{ 0, 32, 48, 56, 64, 68, 72, 76, 80, 84, 88, 92, 96, 100, 104, 108, 112, 116,
120, $$ $$124, 128, 132, 136, 140, 144, 148, 152,
156, 160, 164, 168, 172, 176, 180,
184, 188, $$$$192, 200, 208, 224, 256 \}.$$

A Magma calculation yields
$$S+S=\{0, 32, 48, 56, 64, 68, 72, 76, 80, 84, 88, 92, 96, 100, 104, 108, 112, 116,
$$ $$120, 124, 128, 132, 136, 140, 144, 148, 152, 156, 160, 164, 168, 172, 176, 180,
184, $$ $$ 188, 192, 196, 200, 204, 208, 212,216, 220, 224, 228, 232, 236, 240, 244,
248, 252, $$ $$256, 260, 264, 268, 272, 276, 280, 284, 288, 292, 296, 300, 304, 308,
312, 316, 320, $$ $$ 324, 328, 332, 336, 340,344, 348, 352, 356, 360, 364, 368, 372,
376, 380, 384, 388, $$ $$392, 396, 400, 404, 408, 412, 416, 420, 424, 428, 432, 436,
440, 444, 448, 456, 464, $$$$480, 512
\}.$$
By McEliece's congruence \cite{CC-McE} the weights in $RM(4,9)$ are multiples of $4.$ By Theorem 2 the weights $36,40,44,52$ and their complements to $512$ are excluded, since 
$$64-36=28,\,64-40=24,\,64-44=20,\,64-52=12, $$
which are not powers of $2.$ However, still by Theorem 2 
the integer  $60=64-4=2^6-2^{6-4}$ is a weight of $RM(4,9).$

Note that this integer does not appear in the spectrum of
$RM(3,8)$ since then, in the notation of Theorem 2, $\mu \le \lfloor \max(3,\frac{7}{2})\rfloor=3,$ when $\mu = 4$ for the weight $60$ of $RM(4,9).$ This shows that the statement of Proposition 3 can only be valid for $m\ge 9.$

Hence $\{60\} \cup S+S$ is the whole spectrum  of  $RM(4,9).$

\subsection{The weights of $RM(5,10)$}

We summarize the partial knowledge we have of the spectrum of $RM(5,10).$
\begin{proposition} The weight spectrum of $RM(5,10)$ contains

\begin{enumerate}
\item the integer $62$ and its complement to $1024,$ namely $962$
 \item all the even integers in the range $[448,576]$
 \item the set $$\{ 0, 32, 48, 56, 60, 64, 68, 72, 76, 80, 84, 88, 92, 96, 100, 104, 108, 112,$$ $$
116, 120, 124, 128, 132, 136, 140, 144, 148, 152, 156, 160, 164, 168, 172, 176,$$ $$
180, 184, 188, 192, 196, 200, 204, 208, 212, 216, 220, 224, 228, 232, 236, 240,$$ $$
244, 248, 252, 256, 260, 264, 268, 272, 276, 280, 284, 288, 292, 296, 300, 304,$$ $$
308, 312, 316, 320, 324, 328, 332, 336, 340, 344, 348, 352, 356, 360, 364, 368,$$ $$
372, 376, 380, 384, 388, 392, 396, 400, 404, 408, 412, 416, 420, 424, 428, 432,$$ $$
436, 440, 444, 448, 452, 456, 460, 464, 468, 472, 476, 480, 484, 488, 492, 496,$$ $$
500, 504, 508, 512, 516, 520, 524, 528, 532, 536, 540, 544, 548, 552, 556, 560,$$ $$
564, 568, 572, 576, 580, 584, 588, 592, 596, 600, 604, 608, 612, 616, 620, 624,$$ $$
628, 632, 636, 640, 644, 648, 652, 656, 660, 664, 668, 672, 676, 680, 684, 688,$$ $$
692, 696, 700, 704, 708, 712, 716, 720, 724, 728, 732, 736, 740, 744, 748, 752,$$ $$
756, 760, 764, 768, 772, 776, 780, 784, 788, 792, 796, 800, 804, 808, 812, 816,$$ $$
820, 824, 828, 832, 836, 840, 844, 848, 852, 856, 860, 864, 868, 872, 876, 880,$$ $$
884, 888, 892, 896, 900, 904, 908, 912, 916, 920, 924, 928, 932, 936, 940, 944,$$ $$
948, 952, 956, 960, 964, 968, 976, 992, 1024 \}$$
 
\end{enumerate}

\end{proposition}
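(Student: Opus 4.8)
The plan is to establish the three asserted items separately, since each calls for a different tool, and I expect only the second item to be genuinely difficult.

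\emph{Item 1 (the weights $62$ and $962$).} This should follow at once from the Kasami--Tokura bound. For $RM(5,10)$ we have $2^{m-r}=32$ and $2^{m-r+1}=64$, so Theorem 2 governs the interval $[32,64)$ with $\alpha=\min(5,5)=5$ and $\beta=\frac{10-5+2}{2}=\frac72$, whence $\max(\alpha,\beta)=5$ and the admissible weights there are $64-2^{6-\mu}$ for $1\le\mu\le 5$. Taking $\mu=5$ gives $64-2=62$. Since the constant function $1$ has degree $0\le 5$, the all-one word lies in $RM(5,10)$, so the code is self-complementary and $1024-62=962$ is a weight as well. (Taking $\mu=4$ gives $60$, re-proving that $60$ is a weight.)

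\emph{Item 3 (the large list).} This should be exactly one application of Lemma 1 to the spectrum of $RM(4,9)$ determined in the preceding subsection. Writing $A$ for that spectrum, Lemma 1 shows that the spectrum of $RM(5,10)=RM(4+1,9+1)$ contains $A+A$. Since $A$ consists of $\{0,32,48,56,60\}$ together with all multiples of $4$ from $64$ to $448$ and the top values $456,464,480,512$ (here $60$ being the extra Kasami--Tokura weight), a finite check by hand or in Magma should confirm that $A+A$ is precisely the displayed set. In particular every element of $A$ is a multiple of $4$, in accordance with McEliece's congruence for $RM(4,9)$, so $A+A$ contributes only multiples of $4$, and the gaps near the ends (for instance the jump $968,976,992,1024$) match the gaps of $A$ under complementation.

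\emph{Item 2 (all even integers in $[448,576]$).} This is the crux and, I expect, the main obstacle. These weights cannot come from the two tools above: Theorem 2 controls only the low interval $[32,64)$, and Lemma 1 applied to $RM(4,9)$ yields only multiples of $4$, whereas $[448,576]$ must contain the residues $\equiv 2\pmod 4$ (namely $450,454,\dots,574$). By Theorem 1 such weights do occur in $RM(5,10)$ because McEliece's congruence is tight, but producing a whole central band of them is the hard part. A natural constructive attempt uses the $(u,u+v)$ decomposition $RM(5,10)=\{(u,u+v):u\in RM(5,9),\,v\in RM(4,9)\}$, in which $u$ and $u+v$ lie in a common coset of $RM(4,9)$ in $RM(5,9)$; since $RM(5,9)=RM(m-4,m)$ with $m=9$ realizes \emph{all} even weights between $28$ and $484$ (in particular weights $\equiv 2\pmod 4$), and since $wt(u)+wt(u+v)=wt(v)+2\,wt(u\wedge\overline{v})$, one would try to force this sum to sweep every even value of $[448,576]$ by choosing $v$ of small weight and controlling the parity of $wt(u\wedge\overline{v})$. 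The obstruction is that the relevant cosets are far too large to enumerate, and steering the intersection count through a full range of parities requires an explicit family rather than a mere weight count.

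For this reason the route I would actually take mirrors the treatment of $RM(3,8)$ in Subsection 5.2: cut $RM(5,10)$ down to a subcode of manageable dimension by intersecting it with the extension of a suitable primitive BCH code $\widehat{BCH(1023,\delta)}\subseteq RM(5,10)$, chosen so that the intersection is small enough for the \texttt{WeightDistribution} command of Magma yet still realizes the residues $\equiv 2\pmod 4$ in the central band, and then read the weights of Item 2 directly off the computed distribution, just as the weights $100,108,\dots,164$ of $RM(3,8)$ were recovered from $RM(3,8)\cap\widehat{BCH(255,19)}$. The delicate choice is the designed distance $\delta$: too large and the subcode loses the central weights, too small and the weight distribution becomes infeasible to compute. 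Finally, I note that no completeness argument is required, since the statement asserts only that the spectrum \emph{contains} these weights; it therefore suffices to exhibit one codeword of each listed weight, which the three arguments above are designed to do.
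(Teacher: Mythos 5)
Your proposal is correct and matches the paper's proof in all essentials: item 1 via the Kasami--Tokura existence statement (the paper instead exhibits the equivalent explicit function $x_1x_2x_3x_4x_5\oplus x_6x_7x_8x_9x_{10}$ of weight $62$), item 3 via Lemma 1 applied to the spectrum of $RM(4,9)$, and item 2 via a Magma computation of the weight distribution of $RM(5,10)$ intersected with an extended primitive BCH code of length $1023$. The only detail you leave open is the designed distance, which the paper takes to be $\delta=157$.
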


\begin{proof}
\noindent
\begin{enumerate}
\item By considering the special Boolean function $x_1x_2x_3x_4x_5 \oplus x_6x_7x_8x_9x_{10}.$

 \item Follows by computing in Magma the weight distribution of
 the intersection between the extension of BCH(1023,157) and
 RM(5,10).
 \item By Lemma 1 applied to the weight spectrum of $RM(4,9)$
 computed in the previous subsection.
\end{enumerate}

\end{proof}

\section{Conclusion and open problems}\label{sec6}

In this note, we have derived the weight spectra of two infinite families of Reed-Muller codes. For further generalization, the base point of the recurrence might not be amenable to exact enumeration by computer and theoretical work will be necessary.
The weight spectra of $RM(4,10)$ and $RM(5,10)$ are not known. (The weight spectra of $RM(3,9)$, which is given in {\tt http://oeis.org/A018895} and of $RM(4,9)$, which is determined in \S 5.2, are not enough for deriving those of $RM(4,10)$ and $RM(5,10)$ by the techniques in this note). This knowledge
is  necessary for starting an induction to determine the spectrum of $RM(m-5,m).$,

Based on the results of Proposition 1 and Proposition 2, it is natural
to conjecture that the weight spectrum of $RM(m-c,m)$  contains all the weights between the minimum distance $2^c$ and its complement to the length $2^m$, that are authorized by McEliece's congruence and Kasami-Tokura's result. Since $\lfloor \frac{m-1}{m-c}\rfloor=1$ for $m> 2c-1,$
we can formulate the following:\\ \ \

\noindent {\bf Conjecture}: Let $c$ be any positive integer. Then for $m> 2c-1$, the weight spectrum of $RM(m-c,m)$ is of the form:
$$\{0\}\cup A \cup B \cup C \cup \overline{B}\cup  \overline{A}\cup \{2^m\},$$

where: \begin{itemize}
       \item $A \subseteq [2^c, 2^{c+1}],$ is given by Kasami and Tokura \cite{CC-Kasami-Tokura},
       \item $B\subseteq [2^{c+1},2^{c+1}+2^c],$ is given by Kasami, Tokura, and Azumi in \cite[Page 392 and foll.]{CC-kasami76:_reed_muller},
        \item $C\subseteq [2^{c+1}+2^c,2^m-2^{c+1}-2^c],$ consists of consecutive even integers,
       \item $\overline{A}$ stands for the complement to $2^m$ of $A$, and $\overline{B}$ stands for the complement to $2^m$ of $B.$
      \end{itemize}

This conjecture is verified for $c=1,2,3,4$. The first open case is $c=5$. \\ \ \

In view of the ternary and quinary analogues of Theorem 15 of \cite{SLNS}, namely Theorems 17 and 18 of \cite{SLNS}, it is natural to ask for analogues of our results for Generalized Reed Muller codes of
characteristics 3 and 5. However, an exact analogue of Theorem 2 in that context does not seem to be available from the literature.\\

{\bf Acknowledgement}. We thank the anonymous Reviewers for their useful comments;  one of these provided the example of Boolean function mentioned in the proof of Proposition 3.

\end{document}